\newtheorem{Lem}{Lemma}[section]
\newtheorem{Pro}{Proposition}[section]
\newtheorem{Example}{Example}[section]
\def\E{{E}}
\def\l{{l}}
\begin{document}

\def\spacingset#1{\renewcommand{\baselinestretch}%
{#1}\small\normalsize} \spacingset{1.5}

%%%%%%%%%%%%%%%%%%%%%%%%%%%%%%%%%%%%%%%%%%%%%%%%%%%%%%%%%%%%%%%%%%%%%%%%%%%%%%

\begin{center}
\textbf{\Large Sparse estimation via non-concave penalized likelihood in factor analysis model}
\end{center}
\begin{center}

\large {Kei Hirose and Michio Yamamoto}
\end{center}
\begin{center}
{\it {\small
Division of Mathematical Science, Graduate School of Engineering Science, Osaka University,\\
1-3, Machikaneyama-cho, Toyonaka, Osaka, 560-8531, Japan 
}}
\end{center}

\begin{abstract}
We consider the problem of sparse estimation in a factor analysis model.  A traditional estimation procedure in use is the following two-step approach: the model is estimated by maximum likelihood method and then a rotation technique is utilized to find sparse factor loadings.  However, the maximum likelihood estimates cannot be obtained when the number of variables is much larger than the number of observations.  Furthermore, even if the maximum likelihood estimates are available, the rotation technique does not often produce a sufficiently sparse solution.  In order to handle these problems, this paper introduces a penalized likelihood procedure that imposes a nonconvex penalty on the factor loadings.   We show that the penalized likelihood procedure can be viewed as a generalization of the traditional two-step approach, and the proposed methodology can produce sparser solutions than the rotation technique.  A new algorithm via the EM algorithm along with coordinate descent is introduced to compute the entire solution path, which permits the application to a wide variety of convex and nonconvex penalties.  Monte Carlo simulations are conducted to investigate the performance of our modeling strategy.  A real data example is also given to illustrate our procedure.  
\end{abstract}

\noindent%
{\it Keywords:}  Coordinate Descent Algorithm, Nonconvex Penalty, Rotation Technique

\spacingset{1.5}
\section{Introduction}

Factor analysis provides a useful tool  for exploring the covariance structure among a set of observable variables by construction of a smaller number of unobservable variables.  %, and successful applications have been reported in various fields of research including the social and behavioral sciences.   
In exploratory factor analysis, a traditional estimation procedure in use is the following two-step approach: the model is estimated by the maximum likelihood method 
with the use of efficient algorithms 
(e.g., \citealp{joreskog1967some}; \citealp{jennrich1969newton}; \citealp{clarke1970rapidly}), and then rotation techniques, such as the varimax method \citep{kaiser1958varimax} and the promax method \citep{hendrickson1964promax}, are utilized to find sparse factor loadings.   
%to find a meaningful relationship between the observable variables and the common factors. %In particular, the sparse estimation, which means some of the estimated factor loadings can be very close to or exactly zero, often yields  interpretable model.   
However, it is well known that the maximum likelihood method often yields unstable estimates because of overparametrization (e.g., \citealp{akaike1987factor}).  In particular, the above-mentioned algorithms cannot often be used when the number of variables is much larger than the number of observations.  Furthermore, even if the maximum likelihood  estimates are available, the rotation technique does not often produce a sufficiently sparse solution.  In such cases, a penalized likelihood procedure that produces the sparse solutions, such as the lasso \citep{Tibshirani:1996}, can handle these issues.  %In order to perform sparse estimation in penalized likelihood approach, a penalized likelihood procedure that imposes an $L_1$ penalty, such as the lasso \citep{Tibshirani:1996}, may produce more stable estimates and 
%also be applied to high-dimensional data.  

 %In particular, the $L_1$ penalization procedure such as the lasso \citep{Tibshirani:1996} can yield sparse factor loadings, which makes the interpretation of the estimated model easier.

%The above-mentioned method introduces a challenge of the selection the nonzero parameters.  
%The best subset selection of nonzero parameters, which can be viewed as $L_0$ penalization procedure, may produce sparse solutions, but is computationally-expensive.  
%The lasso \citep{Tibshirani:1996} has received much attention as a powerful tool for sparse estimation. 
The lasso \citep{Tibshirani:1996}  has received much attention as a powerful tool for sparse estimation in a variety of statistical modeling and machine learning, including generalized linear models, Gaussian graphical models, and the support vector machine (e.g., \citealp{hastie2004entire,yuan2007model,Friedmanetal:2010}). %It is well-known that the solutions are not usually expressed in a closed form, because the penalty term includes nondifferentiable function. %Several efficient algorithms have been proposed to obtain the lasso solutions  (e.g., least angle regression, \citealp{Efronetal:2004}; coordinate descent algorithm, \citealp{Friedmanetal:2007,Friedmanetal:2010}).  
It is, however, well known that the lasso is biased and estimates an overly dense model \citep{Zou:2006,zhao2007model,Zhang:2010}. Typically, a penalization methods via nonconvex penalties can achieve sparser models than the lasso.
%, and has promising theoretical properties \citep{FanLi:2001}.
  Recently, a number of researchers have presented various nonconvex methods:  bridge regression \citep{FrankFriedman:1993,Fu:1998}, smoothly clipped absolute deviation (SCAD, \citealp{FanLi:2001}), minimax concave penalty (MC+, \citealp{Zhang:2010}), and generalized elastic net \citep{Friedman2008}.  %It is well-known that the solutions are not usually expressed in a closed form, because the penalty term includes nondifferentiable function.  
  Several efficient algorithms to obtain the entire solutions have also been proposed   (e.g., local linear approximation, \citealp{zou2008one}; generalized path seeking, \citealp{Friedman2008}; penalized linear unbiased selection algorithm, \citealp{Zhang:2010}; coordinate descent algorithm, \citealp{breheny2011coordinate,Mazumderetal:2009}).

In the framework of factor analysis models, a few researchers have discussed the penalized likelihood procedure. \citet{Ningetal:2011} and \citet{Choietal:2011} applied the lasso-type penalization procedure to obtain sparse factor loadings, and numerically showed that the penalization method often outperformed the above-mentioned two-step traditional technique.  %\citet{hirose2012variable} introduced a variable selection procedure via the weighted group lasso.  
However,  the relationship between the penalized likelihood procedure and the two-step traditional approach has not yet been discussed.
Moreover, although the ordinary lasso can  estimate an overly dense model in penalized likelihood factor analysis \citep{Choietal:2011,hirose2012variable} as discussed earlier,  
%so that the nonconcave penalized likelihood method is needed to obtain sparser solutions than the lasso,  
the penalized likelihood methods via the nonconvex penalties that produce sparser solutions than the lasso  have not yet been developed.

In the present paper, a new penalization method via nonconvex penalties in factor analysis models is proposed.  We show that penalized likelihood method can be viewed as a generalization of the traditional two-step approach, i.e., the maximum likelihood estimates with the rotation technique can be obtained by the penalized likelihood procedure under certain conditions. The proposed methodology can produce sparser solutions than the rotation technique with maximum likelihood estimates by changing the regularization parameter.  
 In order to produce entire solutions of factor loadings and unique variances,  a new pathwise algorithm via the EM algorithm \citep{rubin1982algorithms} along with coordinate descent for nonconvex penalties \citep{Mazumderetal:2009} is introduced, which permits the application to a wide variety of convex and nonconvex penalties including the lasso, SCAD, and MC+ family.   A  package {\tt fanc} in {\tt R} \citep{R:2010}, which implements our algorithm, is available from Comprehensive R Archive Network (CRAN) at {\tt http://cran. r-project.org/web/packages/fanc/index.html}.   

The remainder of this paper is organized as follows: Section 2 describes the traditional estimation procedure in factor analysis.   In Section 3, we introduce a penalized likelihood factor analysis via nonconvex penalties, and show that penalized likelihood procedure can be viewed as a generalization of the maximum likelihood method with the rotation technique.  Section 4 provides a new algorithm  based on the EM algorithm and coordinate descent to obtain the entire solution path.   In Section 5, we present numerical results for both artificial and real datasets. Some concluding remarks are given in Section 6.

%%%%%%%%%%%%%%%%%%%%%%%%%%%%%%%%%%%%%%%%%%%%%%%%%%%%%%%%%%%%%%%%%%%%%%%%%%%%%%%%%%%%%%%%%%%%%%%%%%%%%%%%%%%%%%%%%%%%%%%%%%%%%%%%%%%%%%%%%%%%%%%%%%%%%%%%%%%%%%%

\section{Traditional Estimation Procedure in Factor Analysis}
\subsection{Maximum Likelihood Factor Analysis}
Let $\bm{X}=(X_1,\dots,X_p)^T$ be a $p$-dimensional observable random vector with mean vector $\bm{\mu} $ and variance-covariance matrix $\bm{\Sigma}$. The factor analysis model (e.g., \citealp{mulaik2010foundations}) is
\begin{equation*}
\bm{X} =\bm{\mu} + \bm{\Lambda} \bm{F}+\bm{\varepsilon}  , \label{model1}
\end{equation*}
where $\bm{\Lambda} =(\lambda_{ij})$ is a $p \times m$  matrix of factor loadings, and $\bm{F} = (F_1,\cdots,F_m)^T$ and $\bm{\varepsilon}  = (\varepsilon_1,\cdots, \varepsilon_p)^T$ are unobservable random vectors. The elements of $\bm{F}$  and $\bm{\varepsilon}$  are called common factors and unique factors, respectively. It is assumed that the common factors $\bm{F}$ and the unique factors $\bm{\varepsilon}$ are multivariate-normally distributed with  $\E(\bm{F} )=\bm{0}$,  $\E(\bm{\varepsilon} )=\mathbf{0}$,  $\E(\bm{F}\bm{F}^T)=\mathbf{I}_m$, $\E(\bm{\varepsilon} \bm{\varepsilon} ^T)=\bm{\Psi} $, and are independent (i.e., $\E(\bm{F} \bm{\varepsilon} ^T)=\bm{O}$), where $\mathbf{I}_m$ is the identity matrix of order $m$, and $\bm{\Psi} $ is a $p \times p$ diagonal matrix with $i$-th diagonal element $\psi_i$, which is called unique variance.  Under these assumptions, the observable random vector $\bm{X}$ is multivariate-normally distributed with variance-covariance matrix $\bm{\Sigma} = \bm{\Lambda} \bm{\Lambda}^T+\bm{\Psi}$.

Suppose that we have a random sample of $N$ observations $\bm{x}_1,\cdots,\bm{x}_N$ from the $p$-dimensional normal population $N_p(\bm{\mu} , \bm{\Lambda} \bm{\Lambda}^T+\bm{\Psi})$. 
The log-likelihood function is given by
\begin{eqnarray}
\ell(\bm{\Lambda}} ,{\bm{\Psi})= -\frac{N}{2}  \left[ p\log(2\pi)+\log |\bm{\Lambda} \bm{\Lambda}^T+\bm{\Psi}| + \mathrm{tr}\{ (\bm{\Lambda} \bm{\Lambda}^T+\bm{\Psi})^{-1} \bm{S} \} \right] , \label{taisuuyuudo}
\end{eqnarray}
where 
%$\bm{X}_N=(\bm{x}_1,\cdots,\bm{x}_N)^T$, $\f(\bm{X}_N |\bm{\Lambda} ,\bm{\Psi}  )$ is the likelihood function and 
$\mathbf{S}=(s_{ij})$ is the sample variance-covariance matrix.

The maximum likelihood estimates of ${\bm{\Lambda} }$ and ${\bm{\Psi} }$ are given as the solutions of $\partial \ell(\bm{\Lambda} ,\bm{\Psi} ) / \partial \bm{\Lambda}  = \mathbf{0}$ and $\partial \ell(\bm{\Lambda} ,\bm{\Psi} ) / \partial \bm{\Psi}  = \mathbf{0}$. Since the solutions cannot be expressed in a closed form,  several researchers have proposed iterative algorithms to obtain the maximum likelihood estimates $\hat{\bm{\Lambda} }_{\mathrm{ML}}$ and $\hat{\bm{\Psi} }_{\mathrm{ML}}$ (e.g., \citealp{joreskog1967some}; \citealp{jennrich1969newton}; \citealp{clarke1970rapidly}; \citealp{rubin1982algorithms}).     

\subsection{Rotation Techniques}
The factor loadings have a rotational indeterminacy, because both $\bm{\Lambda} $ and  $\bm{\Lambda} \mathbf{T}$ generate the same covariance matrix $\bm{\Sigma}$, where  $\mathbf{T}$ is an arbitrary orthogonal matrix.  
%When the matrix $\mathbf{T}$ is chosen so that the loading matrix is sparse,  the interpretation of the estimated factor loadings can often be straightforward.  
The rotation technique (e.g., varimax method,  \citealp{kaiser1958varimax}) has been widely used to find the matrix $\mathbf{T}$ which gives a meaningful relation between items and factors.   
Suppose that $Q(\bm{\Lambda})$ is an orthogonal rotation criterion at $\bm{\Lambda}$.     
The criterion is minimized over all orthogonal rotations with an initial loading matrix being $\hat{\bm{\Lambda}}_{\rm ML}$, i.e.,
 \begin{equation}
\min_{\bm{\Lambda}} Q(\bm{\Lambda}),  \mbox{ subject to } \bm{\Lambda}=\hat{\bm{\Lambda}}_{\rm ML} \mathbf{T} \mbox{ and } \mathbf{T}^T\mathbf{T}=\mathbf{I}_m. \label{problem_rotation_mle0}
\end{equation}

For ease of comprehension,  here and throughout this paper, the criterion $Q(\bm{\Lambda})$ is given by the component loss criterion $Q(\bm{\Lambda}) = \sum_{i=1}^p\sum_{j=1}^mP( |\lambda_{ij}|)$ \citep{jennrich2004rotation,jennrich2006rotation}. 
%Here we assume that  the loss function for each component $P(\cdot)$ produces sparse solution.  
%, i.e., $\sum_{i=1}^p\sum_{j=1}^mP( |\lambda_{ij}|)$ is minimized when $\bm{\Lambda} = \bm{O}$.    
Note that it is not necessary to assume that $Q(\bm{\Lambda})$ is a component loss criterion for constructing our modeling procedure. The problem in (\ref{problem_rotation_mle0}) is then expressed as
 \begin{equation}
\min_{\bm{\Lambda}} \sum_{i=1}^p\sum_{j=1}^mP( |\lambda_{ij}|),  \mbox{ subject to } \bm{\Lambda}=\hat{\bm{\Lambda}}_{\rm ML} \mathbf{T} \mbox{ and } \mathbf{T}^T\mathbf{T}=\mathbf{I}_m. \label{problem_rotation_mle}
\end{equation}
%\begin{remark}
%The constraint of loss function in  (\ref{component loss function}) seems to be strong because some of the popular rotation techniques such as varimax method \citep{kaiser1958varimax} do not satisfy that  constraint.  However, the constraint is not needed to  through this paper. We use this constraint because the lasso-type penalty, which is focused on this paper, satisfies the constraint.  The constraint is used just for 
%\end{remark}

\begin{Example}\label{example:pss}
%This loss function is a special case of component loss function proposed by \citet{jennrich2004rotation} and \citet{jennrich2006rotation}.  
%Our procedure can enhance the sparsity of component loss function.   
%Suppose that the loss function is expressed in the following form:
%\begin{equation}
%Q(\bm{\Lambda}) = \sum_{i=1}^p\sum_{j=1}^mP( |\lambda_{ij}|). \label{component loss function}
%\end{equation}
%The loss function in (\ref{component loss function}) is the class of component function given by \citet{jennrich2004rotation,jennrich2006rotation}. 
Suppose that the maximum likelihood estimates of factor loadings are
\begin{equation*}
\hat{\bm{\Lambda}}_{\rm ML} =
\left(
\begin{array}{rrrrrr}
0.60&0.60&0.60&0.60&0.60&0.60\\
0.60&0.60&0.60&-0.60&-0.60&-0.60\\
\end{array}
\right)^T.
\label{pss00}
\end{equation*}
If $P(|\theta|)$ is a $L_1$ loss function, i.e., $P(|\theta|) = |\theta|,$ the factor loadings are estimated by
\begin{equation}
%\hat{\bm{\Lambda}} = {\rm arg} \min_{\bm{\Lambda}}\sum_{i=1}^p\sum_{j=1}^m P(|\lambda_{ij}|) = 
\left(
\begin{array}{rrrrrr}
0.00&0.00&0.00&0.82&0.82&0.82\\
0.82&0.82&0.82&0.00&0.00&0.00\\
\end{array}
\right)^T.
\label{pss}
\end{equation}
In this way, the $L_1$ loss function tends to produce a sparse solution.
\end{Example}

The  factor loading in (\ref{pss}) possesses a perfect simple structure, that is, each row has at most one nonzero element. It is shown that the $L_1$ loss function $P(|\theta|) = |\theta|$  can recover perfect simple structure whenever it exists (\citealp{jennrich2004rotation,jennrich2006rotation}).  Note that  even if the true factor loadings possess the perfect simple structure,  the maximum likelihood estimates cannot have a perfect simple structure because of sampling error. 

\begin{Example}\label{example:pss2}
Suppose that the data is generated from $\bm{x} \sim N_8(\bm{0},\bm{\Lambda} \bm{\Lambda}^T + \bm{\Psi} )$ with $N= 50$, where true factor loadings $\bm{\Lambda}$ and unique variances $\bm{\Psi}$ are given by (\ref{pss}) and $\bm{\Psi}=0.32 \hspace{0.5mm} \mathbf{I}_8$, respectively.  The parameters were estimated by the maximum likelihood method and then the rotation technique via the $L_1$ loss function $P(|\theta|) = |\theta|$ was applied.  The estimated factor loadings are given by
\begin{equation*}
%\hat{\bm{\Lambda}}  = 
\left(
\begin{array}{rrrrrr}
0.10&0.08&0.08&0.75&0.79&0.70\\
0.87&0.83&0.74&0.12&0.07&0.00\\
\end{array}
\right)^T.
\label{pssml}
\end{equation*}
It can be seen that only one out of six parameters was correctly estimated by exactly zero.  
\end{Example}
As shown in Example \ref{example:pss2}, the rotation technique based on the maximum likelihood estimates can often yield an overly dense model.  In addition, the maximum likelihood estimates cannot often be obtained when  the number of variables is much larger than the number of observations.  In order to enhance the sparsity and deal with high-dimensional data, we employ a penalized likelihood procedure.  

\section{Penalized Likelihood Factor Analysis via Nonconvex Penalties}\label{Comparison with rotation technique}
%A traditional method to find the interpretable factor loading is based on two-step procedure: first, the factor analysis model is estimated by the maximum likelihood procedure (or other method such as principal factor method), and then rotation technique is applied. In this section we compare the traditional rotation technique with the penalization procedure.  %The penalized likelihood procedure can be viewed as the generalization of the  two-step procedure.   

Assume that  the maximum likelihood estimates $\hat{\bm{\Lambda} }_{\mathrm{ML}}$ are unique if  the indeterminacy of the rotation in $\hat{\bm{\Lambda} }_{\mathrm{ML}}$ is taken out.  An example of the condition for identification of factor loadings is given by Theorem 5.1 of \citet{anderson1956statistical}. The problem in (\ref{problem_rotation_mle}) is then expressed as
 \begin{equation}
\min_{\bm{\Lambda} } \sum_{i=1}^p\sum_{j=1}^mP( |\lambda_{ij}|), \mbox{ subject to} \quad   \ell(\bm{\Lambda},\bm{\Psi}) = \hat{\ell}, \label{problem_rotation_mle2}
\end{equation}
where $\hat{\ell}=\ell(\hat{\bm{\Lambda} }_{\mathrm{ML}},\hat{\bm{\Psi} }_{\mathrm{ML}}) $.  
%The function $Q(\bm{\Lambda})$ does not seem to depend on $\bm{\Psi}$, whereas  $\bm{\Lambda}_{\rm ML}$  is a function of $\bm{\Psi}_{\rm ML}$  and then we implicitly assume that the $Q(\bm{\Lambda})$ depends on $\bm{\Psi}$.  

The sparsity may be enhanced by modifying the problem  in (\ref{problem_rotation_mle2}) as follows:
 \begin{equation}
\min_{\bm{\Lambda}}\sum_{i=1}^p\sum_{j=1}^mP( |\lambda_{ij}|), \mbox{ subject to} \quad   \ell(\bm{\Lambda},\bm{\Psi}) \ge \ell^*, \label{problem_rotation_pmle}
\end{equation}
where $\ell^*$  ($\ell^* \le \hat{\ell}$) is a constant value. The value  $\ell^*$ controls the balance between the fitness of data and sparseness.  When  $\ell^*=\hat{\ell}$, the solution  coincides with the maximum likelihood estimates.  On the other hand,  $\bm{\Lambda}=\bm{O}$ if $\ell^* \rightarrow -\infty$.  
%An equivalent formulation of (\ref{problem_rotation_pmle}) is
%\begin{equation}
%\max_{\bm{\Lambda},\bm{\Psi} } \ell(\bm{\Lambda},\bm{\Psi})  -N \rho Q(\bm{\Lambda}),
% \label{problem_rotation_pmle3}
%\end{equation}
%where $\rho > 0$ is a tuning parameter, which corresponds to $\ell^*$.  

The problem in  (\ref{problem_rotation_pmle}) can be solved by maximizing the following penalized log-likelihood function $\ell_{\rho}(\bm{\Lambda},\bm{\Psi}) $:
\begin{equation}
\ell_{\rho}(\bm{\Lambda},\bm{\Psi})  =  \ell(\bm{\Lambda},\bm{\Psi})  -N   \sum_{i=1}^p\sum_{j=1}^m \rho P(|\lambda_{ij}|)
%\ell_{\rho}(\bm{\Lambda},\bm{\Psi})  =  \ell(\bm{\Lambda},\bm{\Psi})  -N   \sum_{i=1}^p\sum_{j=1}^m \rho Q(\bm{\Lambda})
, \label{problem_rotation_pmle3_adj}
\end{equation}
where $\rho > 0$  is a regularization parameter.  Here $P(\cdot)$ can be viewed as a penalty function.  The regularization parameter $\rho$  controls the amount of shrinkage; that is, the larger the value of $\rho$, the greater the amount of shrinkage.  

The  following Proposition suggests that penalized likelihood procedure can be viewed as a generalization of  the maximum likelihood method with the rotation technique.
\begin{Pro} 
When $\rho \rightarrow +0$, the solution in  (\ref{problem_rotation_pmle3_adj}) becomes the maximum likelihood estimates with the rotation technique in (\ref{problem_rotation_mle2}) if the solution path in (\ref{problem_rotation_pmle3_adj}) is continuous near the maximum likelihood estimates.  
\end{Pro}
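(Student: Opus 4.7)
The plan is to exploit two facts: rotational invariance gives $\ell(\hat{\bm{\Lambda}}_{\mathrm{ML}}\mathbf{T}, \hat{\bm{\Psi}}_{\mathrm{ML}}) = \hat{\ell}$ for every orthogonal $\mathbf{T}$, and the defining inequality of the penalized maximizer can be compared against any such rotated MLE. Let $(\bm{\Lambda}_\rho, \bm{\Psi}_\rho)$ denote the maximizer of $\ell_\rho$ in (\ref{problem_rotation_pmle3_adj}), and by the continuity hypothesis let $(\bm{\Lambda}_0, \bm{\Psi}_0) := \lim_{\rho \to 0^+}(\bm{\Lambda}_\rho, \bm{\Psi}_\rho)$. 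Writing $C_\rho = \sum_{i,j} P(|\lambda_{ij,\rho}|)$ and $C_{\mathbf{T}} = \sum_{i,j} P(|(\hat{\bm{\Lambda}}_{\mathrm{ML}}\mathbf{T})_{ij}|)$ for any orthogonal $\mathbf{T}$, the inequality $\ell_\rho(\bm{\Lambda}_\rho, \bm{\Psi}_\rho) \ge \ell_\rho(\hat{\bm{\Lambda}}_{\mathrm{ML}}\mathbf{T}, \hat{\bm{\Psi}}_{\mathrm{ML}})$ rearranges to
\[
\ell(\bm{\Lambda}_\rho, \bm{\Psi}_\rho) - \hat{\ell} \;\ge\; N\rho\,(C_\rho - C_{\mathbf{T}}),
\]
and this one display drives the rest of the argument.

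First I would show that the limit attains the maximum-likelihood value. Since $C_\rho \ge 0$ and $\hat{\ell}$ is the global maximum of $\ell$, the display combined with $\ell(\bm{\Lambda}_\rho, \bm{\Psi}_\rho) \le \hat{\ell}$ yields $\hat{\ell} - N\rho\,C_{\mathbf{T}} \le \ell(\bm{\Lambda}_\rho, \bm{\Psi}_\rho) \le \hat{\ell}$ for any fixed $\mathbf{T}$. Letting $\rho \to 0^+$ forces $\ell(\bm{\Lambda}_\rho, \bm{\Psi}_\rho) \to \hat{\ell}$, and continuity of $\ell$ at $(\bm{\Lambda}_0, \bm{\Psi}_0)$ gives $\ell(\bm{\Lambda}_0, \bm{\Psi}_0) = \hat{\ell}$. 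The identifiability hypothesis stated just before the Proposition then forces $\bm{\Psi}_0 = \hat{\bm{\Psi}}_{\mathrm{ML}}$ and $\bm{\Lambda}_0 = \hat{\bm{\Lambda}}_{\mathrm{ML}}\mathbf{T}^*$ for some orthogonal $\mathbf{T}^*$, so the equality constraint in (\ref{problem_rotation_mle2}) is met at the limit. Next I would show $\bm{\Lambda}_0$ minimizes the penalty over the rotation orbit of $\hat{\bm{\Lambda}}_{\mathrm{ML}}$: dividing the same display by $N\rho > 0$ and using that its left-hand side is nonpositive gives $C_\rho \le C_{\mathbf{T}}$ uniformly in $\rho$, for every orthogonal $\mathbf{T}$; passing to the limit and invoking continuity of $P$ yields $\sum_{i,j} P(|(\bm{\Lambda}_0)_{ij}|) \le C_{\mathbf{T}}$ for every orthogonal $\mathbf{T}$, which is exactly the minimization in (\ref{problem_rotation_mle2}).

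The hard part will not be the inequality calculus, which is elementary, but rather pinning down the limit. The continuity-of-solution-path hypothesis is doing substantial work: without it one would have to argue existence of a limit along some subsequence and rule out pathological branches of the argmax as $\rho$ shrinks. Beyond that, the argument needs only continuity of $P$ at the relevant arguments (which holds for $L_1$, SCAD, and MC$+$) and the identifiability assumption stated just before the Proposition. Without identifiability the limit would at best be some MLE, not necessarily expressible as $\hat{\bm{\Lambda}}_{\mathrm{ML}}\mathbf{T}^*$, so that hypothesis is genuinely used rather than cosmetic.
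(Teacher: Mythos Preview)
The paper does not actually supply a proof of this Proposition; it is stated and then the text moves on. Your argument is correct and supplies the missing details. The single display you isolate,
\[
\ell(\bm{\Lambda}_\rho,\bm{\Psi}_\rho)-\hat\ell \ \ge\ N\rho\,(C_\rho-C_{\mathbf{T}}),
\]
together with the trivial upper bound $\ell(\bm{\Lambda}_\rho,\bm{\Psi}_\rho)\le\hat\ell$, cleanly yields both halves of the claim: the squeeze $\hat\ell - N\rho\,C_{\mathbf{T}}\le \ell(\bm{\Lambda}_\rho,\bm{\Psi}_\rho)\le\hat\ell$ forces $\ell(\bm{\Lambda}_0,\bm{\Psi}_0)=\hat\ell$, and nonpositivity of the left side gives $C_\rho\le C_{\mathbf{T}}$ for every orthogonal $\mathbf{T}$, hence the rotation optimality in the limit. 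The hypotheses you explicitly invoke---continuity of the solution path, continuity and nonnegativity of $P$, and the identifiability assumption stated just before the Proposition---are precisely what is needed, and all hold for the $L_1$, SCAD, and MC$+$ penalties treated in the paper. Your closing remarks correctly identify that the continuity-of-path hypothesis is doing real work (ruling out branch-jumping of the argmax) and that identifiability is essential to write the limit as $\hat{\bm{\Lambda}}_{\mathrm{ML}}\mathbf{T}^*$ rather than merely ``some maximizer of $\ell$.''
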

%The penalization term $ \sum_{i=1}^p\sum_{j=1}^m  P(|\lambda_{ij}|)$ corresponds to the loss function $P(\bm{\Lambda})$.   When $P(\bm{\Lambda}) =  \sum_{i=1}^p\sum_{j=1}^m  P(|\lambda_{ij}|)$ and $\rho=0$, the two problems  (\ref{problem_rotation_mle}) and  (\ref{problem_rotation_pmle3})  produce the same solution  if  the indeterminacy of the rotation in $\hat{\bm{\Lambda} }_{\mathrm{ML}}$ is taken out.

  The $L_1$ penalization procedures, such as the lasso, can yield sparse solutions for some values of $\rho$.  However, the lasso is biased and estimates an overly dense model (e.g., \citealp{Zou:2006}; \citealp{Zhang:2010}).  In this paper, we apply the nonconvex penalties to achieve sparser models than the lasso.  Here are two popular nonconvex penalties.  
\begin{itemize}
%\item The lasso \citep{Tibshirani:1996} is the convex penalty
%\begin{equation}
%\rho P(\theta)=\rho \theta.
%\end{equation}
%The lasso is continuous and fast, but tends to estimate a overly dense model.
\item The SCAD \citep{FanLi:2001} is a nonconvex penalty which is taken sparsity, continuity, and unbiasedness into consideration simultaneously:
\begin{eqnarray*}
P'(\theta;\rho;\gamma)=I(\theta \le \rho) + \frac{(\gamma\rho-\theta)_+}{(\gamma-1)\rho}I(\theta > \rho) \quad \mbox{ for $\gamma>2$} .
\end{eqnarray*}
%\begin{eqnarray}
%\rho P(|\theta|;\rho;\gamma)=\left\{ 
%\begin{array}{ll}
%\rho |\theta| & |\theta| \le \rho\\
% - \dfrac{\theta^2 - 2\gamma\rho |\theta| + \rho^2}{2(\gamma-1)} & \rho < |\theta| \le  \gamma\rho\\
% \dfrac{(\gamma+1)\rho^2}{2} & \gamma\rho < |\theta| \\
%\end{array}
%\right.
%\quad \mbox{ for $\gamma>2$}
%\end{eqnarray}
\item The MC+ \citep{Zhang:2010}  is defined by   
\begin{eqnarray*}
\rho P(|\theta|;\rho;\gamma)&=&\rho \int_0^{|\theta|}\left(1-\frac{x}{\rho\gamma}\right)_+dx\\
&=&\rho \left(|\theta|-\frac{\theta^2}{2\rho\gamma}\right) I(|\theta| < \rho\gamma) + \frac{\rho^2\gamma}{2}I(|\theta| \ge \rho\gamma).
\end{eqnarray*}
For each value of $\rho>0$, $\gamma \rightarrow \infty$ yields soft threshold operator (i.e., lasso penalty) and $\gamma \rightarrow 1+$ produces hard threshold operator.
\end{itemize}
\begin{Example}\label{example:pss3}
We used the same dataset as in Example \ref{example:pss2}, and the entire solution of the lasso and MC+ with $\gamma=7.6$ was computed.  
%The algorithm that produces the entire solution path is detailed in the next Section.  
The solution path of the lasso and  MC+ ($\gamma=7.6$) are depicted in Figure \ref{fig:example1}.  The solid line corresponds to the true nonzero elements, and the dashed line corresponds to the zero elements.      
%The rotation technique with lasso loss function, that is, $\rho P(|\theta|) = \rho |\theta|$ with $\rho \rightarrow +0$,  did not produce the true model (i.e., the solutions with dashed line were not exactly zero).    
 It can be seen that the lasso cannot recover the true model no matter what value of $\rho$ is chosen, whereas the MC+ was able to select the correct model when $\rho \in [0.20, 0.37]$.  
 %Furthermore,   the MC+ can yield sparser solution than the lasso, which makes the interpretation of the estimated factor loadings easier.  We conducted the Monte Calro Simulations in Section \ref{sec:simulation}, and numerical results show that the MC+ often performs better than the lasso in various situations. 
\end{Example}
\begin{figure}[t]
\begin{center}
    \includegraphics[width=140mm]{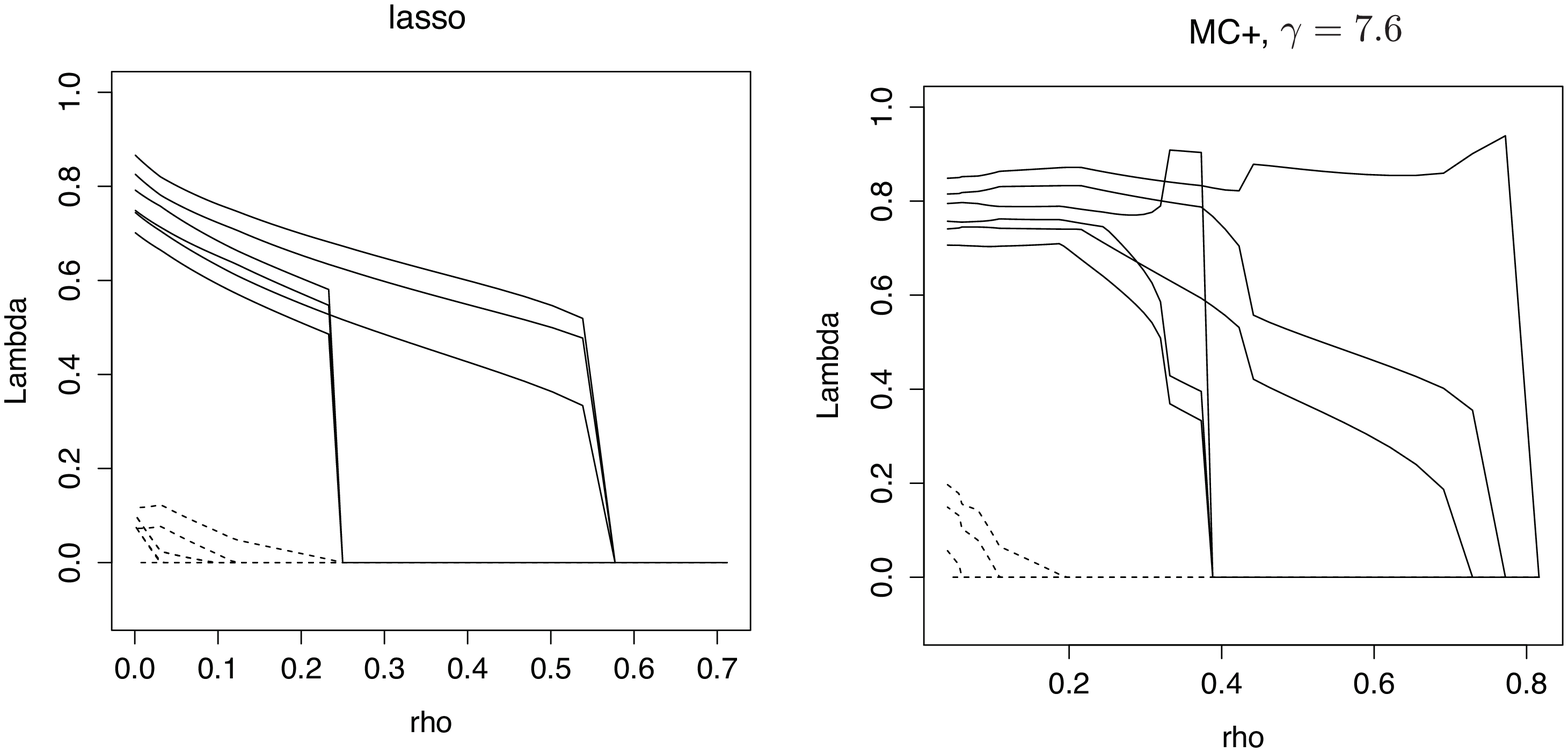}
\caption{Solution path of the lasso and  MC+ ($\gamma=7.6$). The solid line corresponds to the true nonzero elements, and the dashed line corresponds to the zero elements.    }
\label{fig:example1}
\end{center}
\end{figure}
%\begin{Rem}
%The solution path in Figure \ref{fig:example1} can be helpful in understanding the entire solution, whereas the interpretation of the estimated factor loadings can be difficult.  In order to make the interpretation of the estimated factor loadings much easier, we propose a new graphical tool. The illustration of the graphical tool is described in Section \ref{sec:Tucker.9}.
%\end{Rem}

\section{Algorithm}
It is well known that the solutions estimated by the lasso-type regularization methods are not usually expressed in a closed form because the penalty term includes a nondifferentiable function.    In regression analysis, a number of researchers have proposed fast algorithms to obtain the entire solutions (e.g., Least angle regression, \citealp{Efronetal:2004}; Coordinate descent algorithm, \citealp{Friedmanetal:2007}; Generalized path seeking, \citealp{Friedman2008}).    %Although the least angle regression is one of the most efficient algorithms, it cannot be directly applied to the nonconvex penalty.  The generalized path seeking seems to be applied to general nonconvex penalties, but it is unclear how one should define the criterion of the form ``loss + penalty" in (\ref{problem_rotation_pmle3_adj}) for MC+ and SCAD.  
The coordinate  descent algorithm is known as a very fast algorithm \citep{Friedmanetal:2010} and can also  be applied to a wide variety of convex and nonconvex penalties \citep{Mazumderetal:2009}, so that  we employ the coordinate descent algorithm to obtain the entire solutions.  

In the coordinate descent algorithm, each step is fast if an explicit formula for each coordinate-wise maximization is given,  whereas the log-likelihood function in (\ref{taisuuyuudo}) may not lead to the explicit formula.    %we form a quadratic approximation to the log-likelihood (Taylor expansion about current estimates), whereas the quadratic approximation can be unstable in factor analysis model (e.g., \citealp{jennrich1969newton}).  
%The  EM algorithm \citep{rubin1982algorithms} is widely used in maximum likelihood factor analysis .  
In order to derive the explicit formula, we apply the EM algorithm  \citep{rubin1982algorithms} to the penalized likelihood factor analysis, and the coordinate descent algorithm is utilized to maximize the nonconcave function in the maximization step of the EM algorithm.  Because the complete-data log-likelihood function takes the quadratic form, the explicit formula for each coordinate-wise maximization is available. 

\subsection{Update Equation for Fixed Regularization Parameter}
First, we give the update equations of factor loadings and unique variances when $\rho$ and $\gamma$ are fixed.  Suppose that $\bm{\Lambda}_{\rm old}$ and $\bm{\Psi}_{\rm old}$ are the current values of factor loadings and unique variances.   The parameters can be updated by maximizing the expectation of the complete-data penalized log-likelihood function with respect to $\bm{\Lambda}$  and $\bm{\Psi}$:
\begin{eqnarray}
E[\l_{\rho}^{C} (\bm{\Lambda},\bm{\Psi})] &=&- \frac{N}{2} \sum_{i=1}^p \log \psi_i - \frac{N}{2} \sum_{i=1}^p \frac{s_{ii} - 2\bm{\lambda}_i^T\bm{b}_i+ \bm{\lambda}_i^T \bm{A}\bm{\lambda}_i}{\psi_i}
 \cr
&&
   - \frac{N\rho}{2}  \sum_{i=1}^p\sum_{j=1}^mP( |\lambda_{ij}|)+{\rm const.,}  \label{ECL}
\end{eqnarray}
where $\bm{b}_i = \bm{M}^{-1}\bm{\Lambda}_{\rm old}^T\bm{\Psi}_{\rm old }^{-1}\bm{s}_i$ and $\bm{A}=\bm{M} ^{-1} + \bm{M}^{-1}\bm{\Lambda}_{\rm old}^T\bm{\Psi}_{\rm old }^{-1}\bm{S}\bm{\Psi}_{\rm old }^{-1}\bm{\Lambda}_{\rm old }\bm{M}^{-1}$. Here $\bm{M} = \bm{\Lambda}_{\rm old }^T\bm{\Psi}_{\rm old }^{-1}\bm{\Lambda}_{\rm old } + \mathbf{I}_m$, and $\bm{s}_i$ is the $i$-th column vector of $\bm{S}$.  The derivation of the complete-data penalized log-likelihood function is described in Appendix A. The new parameter $(\bm{\Lambda}_{\rm new}, \bm{\Psi}_{\rm new})$ can be computed by maximizing the complete-data penalized log-likelihood function, i.e., 
\begin{eqnarray}
(\bm{\Lambda}_{\rm new}, \bm{\Psi}_{\rm new}) = {\rm arg}\max_{\bm{\Lambda}, \bm{\Psi}} E[\l_{\rho}^{C} (\bm{\Lambda},\bm{\Psi})] .\label{maxECL}
\end{eqnarray}
The solutions in (\ref{maxECL}) are not usually expressed in a closed form because the penalty term includes a nondifferentiable function, so that the coordinate descent algorithm is utilized.
%In M step, we maximize $E[\l_{\lambda}^{C} (\bm{\Lambda},\bm{\Psi})] $ with respect to $\bm{\Lambda}$  and $\bm{\Psi}$.   First, update   $\bm{\Lambda}$ is updated.  
%Because the penalty term includes nondifferentiable function to produce sparse solution,  we use 

%In the coordinate descent algorithm, we optimize each parameter separately with all other parameters being fixed, and cycle untill the estimates are converged.  

Let  $\tilde{\bm{\lambda}}_{i}^{(j)} $ be an ($m-1$)-dimensional vector  $( \tilde{\lambda}_{i1},\tilde{\lambda}_{i2},\dots,\tilde{\lambda}_{i(j-1)},\tilde{\lambda}_{i(j+1)},\dots,\tilde{\lambda}_{im})^T$.  The parameter $\lambda_{ij}$ can be updated by maximizing (\ref{ECL}) with the other parameters $\tilde{\bm{\lambda}}_{i}^{(j)} $ and $\bm{\Psi}$ being fixed, i.e.,  we solve the following problem:
\begin{eqnarray}
\tilde{\lambda}_{ij} &=& {\rm arg} \min_{\lambda_{ij}}  \frac{1}{2\psi_i} \left\{a_{jj}\lambda_{ij}^2  - 2\left(b_{ij} - \sum_{k \ne j} a_{kj} \tilde{\lambda}_{ik} \right)\lambda_{ij} \right\}  + \rho   P( |\lambda_{ij}|) \cr
&=& {\rm arg} \min_{\lambda_{ij}}\frac{1}{2}   \left( \lambda_{ij}  - \frac{b_{ij} - \sum_{k \ne j} a_{kj} \tilde{\lambda}_{ik} }{a_{jj}} \right)^2  + \frac{\psi_i\rho}{a_{jj}} P( |\lambda_{ij}|). \label{lambdaupdate} 
\end{eqnarray}
This is equivalent to minimizing the following penalized squared-error loss function 
\begin{equation*}
S(\tilde{\theta}) = {\rm arg} \min_{\theta} \left\{ \frac{1}{2}(\theta - \tilde{\theta})^2 + \rho^* P(|\theta|) \right\}. \label{lamdba_update_CD}
\end{equation*}
%where $ P(\theta)$ is a penalty function.  
The solution $S(\tilde{\theta}) $ can be expressed in a closed form for a variety of convex and nonconvex penalties as follows:
\begin{description}
\item[lasso:] 
\begin{equation}
S(\tilde{\theta})={\rm sgn}(\tilde{\theta}) (|\tilde{\theta} |- \rho^*)_+. \label{uelasso}
\end{equation}
%The lasso is continuous and fast, but tends to estimate a overly dense model.
\item[SCAD:] 
\begin{equation*}
S(\tilde{\theta})=\left\{
\begin{array}{ll}
{\rm sgn}(\tilde{\theta})(|\tilde{\theta}| - \rho^*)_+ & \mbox{if $ |\tilde{\theta}| \le 2\rho^*$}\\
\dfrac{(\gamma-1)\tilde{\theta} - {\rm sgn}(\tilde{\theta})\rho^*\gamma}{\gamma-2} & \mbox{if $2\rho^* < |\tilde{\theta}| \le \rho^*\gamma$}\\
\tilde{\theta} & \mbox{if $|\tilde{\theta}| > \rho^*\gamma$}.\\
\end{array}
 \right.
\end{equation*}
\item[MC+:]
\begin{equation*}
S(\tilde{\theta})=\left\{
\begin{array}{ll}
 \dfrac{{\rm sgn}(\tilde{\theta})(|\tilde{\theta}| - \rho^*)_+}{1-1/\gamma} & \mbox{if $ |\tilde{\theta}| \le \rho^*\gamma$}\\
\tilde{\theta} & \mbox{if $|\tilde{\theta}| > \rho^*\gamma$}.\\
\end{array}
 \right.
\end{equation*}
\end{description}

%\begin{description}
%\item[lasso:] 
%\begin{equation}
%S(\tilde{\theta})=\left\{
%\begin{array}{ll}
%0 & \mbox{if $|\tilde{\theta}| \le \rho^*$}\\
%{\rm sgn}(\tilde{\theta}) (|\tilde{\theta} |- \rho^*) & \mbox{if $\rho^* < |\tilde{\theta}|$}\\
%\end{array}
% \right.
%\end{equation}
%%The lasso is continuous and fast, but tends to estimate a overly dense model.
%\item[SCAD:] 
%\begin{equation}
%S(\tilde{\theta})=\left\{
%\begin{array}{ll}
%0 & \mbox{if $|\tilde{\theta}| \le \rho^*$}\\
%{\rm sgn}(\tilde{\theta})(|\tilde{\theta}| - \rho^*) & \mbox{if $\rho^* < |\tilde{\theta}| \le 2\rho^*$}\\
%\dfrac{(\gamma-1)\tilde{\theta} - {\rm sgn}(\tilde{\theta})\rho^*\gamma}{\gamma-2} & \mbox{if $2\rho^* < |\tilde{\theta}| \le \rho^*\gamma$}\\
%\tilde{\theta} & \mbox{if $|\tilde{\theta}| > \rho^*\gamma$}\\
%\end{array}
% \right.
%\end{equation}
%\item[MC+:]
%\begin{equation}
%S(\tilde{\theta})=\left\{
%\begin{array}{ll}
%0 & \mbox{if $|\tilde{\theta}| \le \rho^*$}\\
%{\rm sgn}(\tilde{\theta}) \left(\dfrac{|\tilde{\theta}| - \rho^*}{1-1/\gamma}\right) & \mbox{if $\rho^* < |\tilde{\theta}| \le \rho^*\gamma$}\\
%\tilde{\theta} & \mbox{if $|\tilde{\theta}| > \rho^*\gamma$}\\
%\end{array}
% \right.
%\end{equation}
%For each value of $\rho^*>0$, $\gamma \rightarrow \infty$ yields soft threshold operator (i.e., lasso penalty) and $\gamma \rightarrow 1+$ produces hard threshold operator.
%\end{description}

After updating $\bm{\Lambda}$ by the coordinate descent algorithm, the new value of $\bm{\Psi}_{\rm new}$ is obtained by maximizing the expected penalized log-likelihood function in (\ref{ECL}) as follows:
\begin{equation*}
(\psi_i )_{\rm new} = s_{ii} - 2 (\hat{\bm{\lambda}}^T_i)_{\mathrm{new}}\bm{b}_i  +  (\hat{\bm{\lambda}}_i)_{\mathrm{new}}^T \bm{A} (\hat{\bm{\lambda}}_i)_{\mathrm{new}} \quad  \mbox{for $i=1,\dots,p$,}
%\bm{\Psi}_{\rm new} = {\rm diag} \bm{S} - 2 (\hat{\bm{\lambda}}^T_i)_{\mathrm{new}}\bm{b}_i  +  (\hat{\bm{\lambda}}_{\mathrm{new}})_i^T \bm{A} (\hat{\bm{\lambda}}_{\mathrm{new}})_i \quad  \mbox{for $i=1,\dots,p$,}
\end{equation*}
where $(\psi_i )_{\rm new} $ is the $i$-th diagonal element of $\bm{\Psi}_{\rm new}$ and $(\hat{\bm{\lambda}}_i)_{\mathrm{new}}$ is the $i$-th column of $\hat{\bm{\Lambda}}_{\rm new}$.
%\begin{Rem}
%The computation of the $m \times m$ matrix $\bm{A}$ seems to need $O(p^2)$, since the size of the sample variance-covariance matrix $\bm{S}$ is $p \times p$.  When $p$ is much larger than $N$ (e.g., $p=10000$ and $N=100$), the computation of matrix $\bm{A}$ with the use of  $\bm{S}$ can be inefficient. In this case, we first compute the $N \times m$ matrix $\bm{D}=\bm{X}_N\bm{\Psi}_{\rm old }^{-1}\bm{\Lambda}_{\rm old }\bm{M}^{-1}/\sqrt{N}$, and then calculate  the matrix $\bm{A}=\bm{D}^T\bm{D}$, where $\bm{X}_N=(\bm{x}_1,\cdots,\bm{x}_N)^T$.  This procedure only needs $O(Np)$, which dramatically reduces the computational cost.
%\end{Rem}

%\begin{Rem} \label{remarkcolumn}
We found that some of the column vectors of the factor loadings can be the zero vector when $\rho$ is sufficiently large.  As the value of $\rho$ decreases, the number of nonzero column vectors increases.  The following lemma describes the condition of each column of factor loadings.  
%\end{Rem}
\begin{Lem}\label{lemmacolumn}
Each column of factor loadings computed by (\ref{problem_rotation_pmle3_adj}) cannot have only one nonzero parameter.
\end{Lem}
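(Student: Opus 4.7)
The plan is to argue by contradiction via a perturbation that does not alter the fitted covariance matrix $\bm{\Sigma} = \bm{\Lambda}\bm{\Lambda}^T + \bm{\Psi}$ but strictly reduces the penalty. Concretely, suppose a maximizer $(\hat{\bm{\Lambda}},\hat{\bm{\Psi}})$ of $\ell_\rho$ in (\ref{problem_rotation_pmle3_adj}) has a column $j$ with exactly one nonzero entry, say $\hat\lambda_{ij}\neq 0$ and $\hat\lambda_{kj}=0$ for all $k\neq i$. I would then construct a competitor $(\bm{\Lambda}',\bm{\Psi}')$ by setting $\lambda'_{ij}=0$, $\psi'_i=\hat\psi_i+\hat\lambda_{ij}^2$, and leaving every other entry of $\hat{\bm{\Lambda}}$ and $\hat{\bm{\Psi}}$ unchanged.

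The first key step is to check that the fitted covariance is preserved. Because column $j$ of $\hat{\bm{\Lambda}}$ contributes only $\hat\lambda_{ij}^2$ to the $(i,i)$ entry of $\hat{\bm{\Lambda}}\hat{\bm{\Lambda}}^T$ and zero to every other entry, the contribution from that column is absorbed exactly by the increment to $\hat\psi_i$. Thus $\bm{\Lambda}'(\bm{\Lambda}')^T + \bm{\Psi}' = \hat{\bm{\Lambda}}\hat{\bm{\Lambda}}^T + \hat{\bm{\Psi}}$, and consequently $\ell(\bm{\Lambda}',\bm{\Psi}') = \ell(\hat{\bm{\Lambda}},\hat{\bm{\Psi}})$. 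Since $\psi'_i>\hat\psi_i\geq 0$, the competitor lies in the feasible parameter region.

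The second step is to compare the penalty terms. All summands $P(|\lambda_{k\ell}|)$ agree on the two configurations except for the $(i,j)$ entry, where $P(|\lambda'_{ij}|)=P(0)=0$ while $P(|\hat\lambda_{ij}|)>0$; this positivity holds for each of the penalties considered (lasso, SCAD, MC+), as all satisfy $P(0)=0$ and $P(t)>0$ for $t>0$. Hence the penalty strictly decreases, so
\begin{equation*}
\ell_\rho(\bm{\Lambda}',\bm{\Psi}') = \ell_\rho(\hat{\bm{\Lambda}},\hat{\bm{\Psi}}) + N\rho\, P(|\hat\lambda_{ij}|) > \ell_\rho(\hat{\bm{\Lambda}},\hat{\bm{\Psi}}),
\end{equation*}
contradicting the maximality of $(\hat{\bm{\Lambda}},\hat{\bm{\Psi}})$.

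The only real subtlety I anticipate is the bookkeeping around what class of solutions the lemma is asserted for, and ensuring that the feasibility constraint $\psi'_i \geq 0$ (or $>0$ if one forbids Heywood cases) is compatible with the upward perturbation; because we only ever increase $\psi_i$, both conventions are harmless. Otherwise the argument is purely a ``absorb the rank-one contribution into the unique variance'' move, and the sparsity of the penalty does the rest.
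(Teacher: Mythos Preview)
Your argument is correct and is essentially identical to the paper's own proof: both proceed by contradiction, zero out the lone nonzero entry in the offending column, absorb its square into the corresponding unique variance so that $\bm{\Lambda}\bm{\Lambda}^T+\bm{\Psi}$ is unchanged, and then observe that the penalty strictly drops. Your write-up is in fact slightly cleaner on the indexing (the paper perturbs the ``$j$-th'' diagonal of $\bm{\Psi}$ where the row index is intended) and makes explicit the mild assumptions $P(0)=0$, $P(t)>0$ for $t>0$ that the paper leaves implicit.
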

\begin{proof}
The proof is given in Appendix B.
\end{proof}
%Furthermore, \citet{anderson1956statistical} proposed showed that a necessary condition for identification of factor loadings is that each column of $\bm{\Lambda} \bm{C}$ has at least three nonzero elements for every nonsingular $\bm{C}$ (Theorem 5.2 in \citealp{anderson1956statistical}), which suggests each column should have 

\subsection{Pathwise Algorithm}
%\citet{Mazumderetal:2009}
We introduce a pathwise algorithm via coordinate descent that produces the entire solution path for the penalized likelihood factor analysis.  The pathwise algorithm can produce the solution for the grid of  increasing $\rho$ values $P=\{\rho_1,\dots,\rho_K  \}$ and $\gamma$  values $\Gamma=\{ \gamma_1,\dots,\gamma_T \}$.  Here $\rho_K$ is the smallest value for which the estimates of factor loadings satisfy $\hat{\bm{\Lambda}} \approx \bm{O}$, and $\gamma_T$ gives the lasso penalty (e.g., $\gamma_T=\infty$ for MC+ family).   %In the pathwise coordinate descent algorithm, we first compute the lasso solution path for  $P=\{\rho_1,\dots,\rho_K  \}$.    %A minimum value of $\rho$ is defined by $\rho_{1} = \delta \rho_{K} $, and the grid is constructed by decreasing $\rho$  on the log scale.  Typical values are $\delta= 0.001$ and $K = 20$.

In the pathwise algorithm, first, the lasso solution path can be produced by decreasing the sequence of values for $\rho$, starting with $\rho=\rho_K$. Next, the value of $\gamma_{T-1}$ is selected and the solutions are produced for the sequence of $P=\{\rho_1,\dots,\rho_K  \}$.  
%Since  the nonconcave penalized method can produce multiple solutions, the selection of initial value is important.  
The  solution at ($\gamma_{T-1},\rho_k$) can be computed by using the solution at  ($\gamma_{T},\rho_k$), which leads to improved and smoother objective value surfaces \citep{Mazumderetal:2009}. In the same way, for $t=T-2,\dots,1$, the solution at ($\gamma_{t},\rho_k$) can be computed by using the solution at  ($\gamma_{t+1},\rho_k$).%$$\bm{\Lambda}$ $\bm{\Lambda_{\rho_{k}},\gamma_{L-1}}$ $\gamma_{L-1}$ of  the lasso solution path (i.e., $\gamma=\infty$) are used as warm starts for the minimization of $q$, 

%\subsubsection{Initial value}
\subsubsection{Initial Value of Factor Loading }\label{Initial update}
The  initial value of the factor loading and the unique variances might be  assumed to be $\bm{\Lambda} = \bm{O}$ and $\bm{\Psi} = {\rm diag} \bm{S}$.  With these initial values, however, the factor loadings cannot be updated with the EM algorithm no matter what value of $\rho$ is chosen, because $\bm{\Lambda}=\bm{O}$ is the stationary point of the log-likelihood function, i.e., 
\begin{equation*}
\left. \frac{\partial \ell(\bm{\Lambda},\bm{\Psi})}{\partial \bm{\Lambda}} \right|_{ \bm{\Lambda}=\bm{O}}  = \left.-N\bm{\Sigma}^{-1} (\bm{\Sigma} - \bm{S}) \bm{\Sigma}^{-1}\bm{\Lambda} \right|_{\bm{\Lambda}=\bm{O}} = \bm{O}
\end{equation*}
Therefore, the nonzero initial values, say $\breve{\bm{\Lambda}}=(\breve{\lambda}_{ij})$, must be determined.    It may be reasonable to assume that initial values for only the first column of factor loadings have nonzero elements, because most of the columns of factor loadings may be zero when $\rho$ is sufficiently large.  The initial values of the factor loading for the first column are defined by the maximum likelihood estimates of the one-factor model.   % When $\rho=\sqrt{|s_{\gamma\beta}|}$, we have
%\begin{equation}breve
%(\bm{\Lambda}\bm{\Lambda}^T + \bm{\Psi})_{ii'}=
%\left\{
%\begin{array}{ll}
%s_{ii'}&(i,i') = (\gamma,\beta), \   (i,i') =  (\beta,\gamma) \ {\rm or }\ i=i'\\
%0&{\rm otherwise}
%\end{array}
%\right.
%\end{equation}
%and  probably $\bm{\Sigma} = \bm{\Lambda}\bm{\Lambda}^T+\bm{\Psi}$ is close to $\bm{S}$ compared with the case of  $\bm{\Lambda} = \bm{0}$ and $\bm{\Psi} = {\rm diag} \bm{S}$.   The initial value of $\bm{\Psi}$ is determined by the EM algorithm with $\bm{\Lambda}$ being fixed by (\ref{initialLambda}).Since the factor loadings can be very sparse at ,

%At the same time, the value of  $\rho_K$ can be selected.  
\subsubsection{Selection of $\rho_K$}\label{selection of rhoK}
Next, the value of  $\rho_K$ is selected.  Since the factor loadings can be very sparse when  $\rho \approx \rho_K$,   the estimates of ${\bm{\Lambda}}$ at  $\rho \approx \rho_K$, say $\hat{\bm{\Lambda}}^{(h)}=(\hat{\lambda}_{ij}^{(h)})$,  may be close to
\begin{equation*}
\hat{\lambda}_{ij}^{(h)}=
\left\{
\begin{array}{ll}
\xi^{(h)}\breve{\lambda}_{\alpha 1} &i=\alpha , \ j=1\\
0&{\rm otherwise}
\end{array}
\right. \quad {\rm for} \  h=1,\dots,H.\label{initialLambda}
\end{equation*}
where $\alpha = {\rm arg } \max_i |\breve{\lambda}_{i1}|$, and $\xi^{(h)}$ ($h=1,\dots,H$)  is the scale parameter of the initial value. Typically, $H=10$ and $\xi^{(h)} = 0.1h$.  
As shown in Lemma \ref{lemmacolumn}, the first column of the factor loading should have at least two nonzero elements,  
%:  a necessary condition for identification of factor loadings is that each column of $\bm{\Lambda} \bm{C}$ has at least three nonzero elements for every nonsingular $\bm{C}$ (Theorem 5.2 in \citealp{anderson1956statistical}).  
so that %the initial value should have three nonzero elements as well.  When $\bm{\Lambda}$ is defined by  (\ref{initialLambda}), 
 there exist nonzero elements of the factor loading $\hat{\lambda}_{i1}$ $(i \ne \alpha )$.  We can see from (\ref{uelasso}) that $\hat{\lambda}_{i1}$ $(i \ne \alpha )$ will stay zero if $|\tilde{\theta}| < \rho^*$, and thus define $\rho^{(h)}$  by 
%\begin{equation*}
$\rho^{(h)} = \max_{i \ne \alpha } {|b_{i1}|} / {\hat{\psi}_{i}},$ 
%\rho^{(h)} = \max_{i \ne \alpha } \frac{|b_{i1}|}{\hat{\psi}_{i}},
%\rho^{(h)} = \max_{i \ne \alpha }\left| \frac{b_{i1} - \sum_{k \ne 1} a_{k1} \hat{\lambda}_{ik}^{(h)} }{\hat{\psi}_{i}} \right|, \label{gammamaxh}
%\end{equation*}
where $\hat{\psi}_{i}$ are the estimates of $\psi_{i}$ when the factor loadings are fixed by $\hat{\bm{\Lambda}}^{(h)}$.  Then, $\rho_K$ is defined  by $ \max_{h} \rho^{(h)}$.  % and $\arg  \max_{h} \rho_{\max}^h$, respectively.  %The motivation of definition of $\rho_{\max}^h$ in (\ref{gammamaxh}) is the update equation in  .    %Empirically, the value of $\lambda_0$ is relatively close to $\sqrt{|s_{\gamma\beta}|}$.

\subsubsection{Increasing  the Number of Factors}
%The solution around $\rho_K$ may have nonzero elements for the first column but zeros for the other columns (i.e., the number of factors is 1).    When $\rho$ is small, the other columns should have nonzero parameters if $m \ge 2$, whereas the coordinate descent algorithm based on (\ref{lambdaupdate}) cannot produce nonzero elements for the other columns.  Therefore, at each $\rho$, we should check if the number of factors should be increased. Let $m_0$ be the number of columns where some of the elements have nonzero estimates.   Then, we generate a $p$-dimensional nonzero vector from $N_p(\bm{0},\mathbf{I})$ and substitute it into the $(m_0+1)$-th column of factor loadings, and then update the factor loadings.  If the $(m_0+1)$-th column of $\bm{\Lambda}$ is estimated by the zero-vector, the $(m_0+1)$-th column will not be further updated.  If the $(m_0+1)$-th column of $\bm{\Lambda}$ has nonzero elements, the penalized log-likelihood function is computed, and we check if the penalized log-likelihood has been increased.   
%If the penalized log-likelihood is increased, the $(m_0+1)$-th column is added.  

The solution around $\rho_K$ may have nonzero elements for the first column but zeros for the other columns (i.e., the number of factors is 1).    When $\rho$ is small, the other columns should have nonzero parameters if $m \ge 2$, whereas the pathwise algorithm based on (\ref{lambdaupdate}) does not often produce nonzero elements for the other columns.  Therefore, at each $\rho$, we should check if the number of factors is increased. Let $m_0$ be the number of columns where some of the elements have nonzero estimates.   If $m_0<m$, the initial loading matrix is randomly generated and the model parameters are estimated by penalized likelihood procedure.  We adopt the newly estimated model if it yields larger value of penalized log-likelihood function.

\subsubsection{Reparameterization of the Penalty Function}\label{reparameterization:section}
The value of $\rho_K$, which is the smallest value that the factor loadings are set to approximately zero, is determined by Section \ref{selection of rhoK} for the lasso penalty. For the hard-thresholdings operater, however, the value of $\rho_K$ should be larger than that of the lasso. More generally, the value of $\rho_k$ should be monotonically increased as one moves across the family from the soft thresholdings operator to the hard one.  A reparameterization of the nonconvex penalty based on the degrees of freedom (\citealp{Ye:1998}; \citealp{Efron:1986}; \citealp{Efron:2004}) has been proposed by \citet{Mazumderetal:2009}, and  we apply it to the penalized likelihood factor analysis.  The reparameterization of the penalty function constrains the degrees of freedom at any value of $\rho$ to be constant as $\gamma$ changes.   

For example, the reparameterization of the MC+ family is as follows:  suppose that $\rho$ is the regularization parameter for $\gamma=\infty$, and $\rho^*=\rho^*(\rho, \gamma_0)$ is the regularization parameter for $\gamma=\gamma_0$.  The reparameterization can be realized by solving the following problem \citep{Mazumderetal:2009}:
\begin{eqnarray*}
\Phi(\gamma_0\rho) - \gamma_0\Phi(\rho) = -(\gamma_0-1)\Phi(\rho^*). \label{reparameterization}
\end{eqnarray*}
%The equation in (\ref{reparameterization}) can be immediately solved by grid search.  

%\begin{remark} The degrees of freedom can be derived for only squared loss function.  Since the negative log-likelihood function in factor analysis model is not squared loss, the degrees of freedom cannot be directly applied to the factor model.  Empirically, however,  we found the reparameterization tends to select appropriate value of  $\rho_k$ and then apply the reparameterization to the factor analysis model.  Further research of the reparameterization for factor analysis is needed.  %We would like to take this to the future research topic.  
%\end{remark}

\subsection{Selection of the Regularization Parameter}
In this modeling procedure, it is important to select the appropriate value of the regularization parameter $\rho$.  %The following two selection procedures are introduced. 
%\subsubsection{Model Selection Criteria}\label{sec:MSC}
 The selection of the regularization parameter can be viewed as a model selection and evaluation problem. In regression analysis, the degrees of freedom of the lasso \citep{Zouetal:2007} may be used for selecting the regularization parameter.  With the use of the degrees of freedom, the following model selection criteria are introduced:
\begin{eqnarray*}
{\rm AIC} &=& -2 \ell(\hat{\bm{\Lambda}} ,\hat{\bm{\Psi}}) + 2\{df(\rho_k)+p\},\\
{\rm BIC} &=& -2 \ell(\hat{\bm{\Lambda}} ,\hat{\bm{\Psi}}) + (\log N ) \{df(\rho_k)+p\} ,\\
{\rm CAIC} &=& -2 \ell(\hat{\bm{\Lambda}} ,\hat{\bm{\Psi}}) +  (\log N + 1) \{df(\rho_k)+p\},
\end{eqnarray*}
where $df(\rho_k)$ is the number of nonzero parameters for the lasso penalty at $\rho=\rho_k$. Note that this formula can be applied to any value of $\gamma$, because the reparameterization of the penalty function described in Section \ref{reparameterization:section} constrains the degrees of freedom to be constant as $\gamma$ varies.  

\begin{comment}
\subsubsection{Goodness-of-Fit Index}
It may be easy to interpret the estimated model when the factor loadings are sufficiently sparse.  However, a model that is too sparse does not fit the data.  Therefore, it is reasonable to select a regularization parameter that produces sparse solutions and also yields large values for the following goodness-of-fit index (GFI) and the adjusted GFI (AGFI):
\begin{eqnarray*}
{\rm GFI} &=& 1- \frac{ {\rm tr} [ \{ \hat{\bm{\Sigma}}^{-1}  (\bm{S} - \hat{\bm{\Sigma}} ) \}^2 ] }{  {\rm tr} [ \{ \hat{\bm{\Sigma}}^{-1} \bm{S} \}^2 ] }, \\
{\rm AGFI} &=& 1-\frac{p(p+1)(1-{\rm GFI})}{p(p+1)-2df},
\end{eqnarray*}
 where  $\hat{\bm{\Sigma}} = \hat{\bm{\Lambda}} \hat{\bm{\Lambda}}^T  + \hat{\bm{\Psi}} $.  The GFI and AGFI take values from 0 through 1.  In our experience, the model is fitted well if the value of the GFI is greater than 0.9.   
\end{comment}
 
  \subsection{Treatment for Improper Solutions}
 It is well-known that the maximum likelihood estimates of unique variances can turn out to be zero or negative, which is referred as the improper solutions, and many researchers have studied this problem (e.g., \citealp{van1978various,anderson1984effect,kano1998improper}).  In general, the occurrence of improper solutions makes converge of the algorithm slow and unstable.  In order to handle this issue, we add a penalty with respect to $\mathbf{\Psi}$ to (\ref{problem_rotation_pmle3_adj}) according to the basic idea given by \citet{martin1975bayesian} and \citet{hirose2011bayesian}: 
\begin{equation*}
%\ell_{\rho}^*(\mathbf{\Lambda},\mathbf{\Psi},\mathbf{\Phi})  =  \ell(\mathbf{\Lambda},\mathbf{\Psi},\mathbf{\Phi})  -N   \sum_{i=1}^p\sum_{j=1}^m \rho P(|\lambda_{ij}|) -\frac{N}{2} \eta{\rm tr}(\mathbf{\Psi}^{-1/2}\mathbf{S}\mathbf{\Psi}^{-1/2}),  \label{problem_rotation_pmle3_adj2}
\ell_{\rho}^*(\mathbf{\Lambda},\mathbf{\Psi})  =  \ell_{\rho}(\mathbf{\Lambda},\mathbf{\Psi})   -\frac{N}{2} \eta{\rm tr}(\mathbf{\Psi}^{-1/2}\mathbf{S}\mathbf{\Psi}^{-1/2}),  \label{problem_rotation_pmle3_adj2}
\end{equation*}
where $\eta$ is a tuning parameter.  Note that when  $\psi_i \rightarrow 0$, ${\rm tr}(\mathbf{\Psi}^{-1/2}\mathbf{S}\mathbf{\Psi}^{-1/2}) \rightarrow \infty$.  Thus, the penalty term ${\rm tr}(\mathbf{\Psi}^{-1/2}\mathbf{S}\mathbf{\Psi}^{-1/2})$ prevents the occurrence of improper solutions.  \citet{hirose2011bayesian} derived a generalized Bayesian information criterion \citep{konishi2004bayesian} for selecting the appropriate value of $\eta$, whereas it is difficult to derive generalized Bayesian model criterion in lasso-type penalization procedure.  In practice, the penalty term can prevent the occurrence of improper solution even when $\eta$ is very small such as 0.001.  %, so that we can set $\eta$ a very small positive value such as 0.001 when the improper solutions occurs.   

\section{Numerical Examples}
%The proposed modeling strategy is investigated through the  Monte Carlo simulations and real data analysis.  In the first real data example, our algorithm is applied to gene expression data.  
% In the second real data example, we illustrate the regularization parameter selection procedure via the graphical tool.
\subsection{Monte Carlo Simulations}\label{sec:simulation}
In the simulation study, we used two models according to the following factor loadings:
\begin{eqnarray*}
{\bf Model (A): \quad} 
\bm{\Lambda} &=& 
\left(
\begin{array}{rrrrrr}
0.95&0.90&0.85&0.00&0.00&0.00\\
0.00&0.00&0.00&0.80&0.75&0.70\\
\end{array}
\right)^T,\\
{\bf Model (B): \quad } 
 \mathbf{\Lambda} &=& 
 \left(
 \begin{array}{cccc}
 0.95 \cdot \mathbf{1}_{250} &\mathbf{0}_{250} & \mathbf{0}_{250} &\mathbf{0}_{250} \\
\mathbf{0}_{250} &0.90\cdot\mathbf{1}_{250} & \mathbf{0}_{250}  & \mathbf{0}_{250}  \\
\mathbf{0}_{250} &\mathbf{0}_{250} &0.85 \cdot\mathbf{1}_{250}  & \mathbf{0}_{250}  \\
\mathbf{0}_{250} &\mathbf{0}_{250} & \mathbf{0}_{250}  & 0.80 \cdot\mathbf{1}_{250}  \\
 \end{array}
 \right),
 \end{eqnarray*}
where $\mathbf{1}_{q}$ is a $q$-dimensional vector with each element being 1, and $\mathbf{0}_{q}$ is a $q$-dimensional zero vector.  For both Models (A) and (B), we set $\mathbf{\Psi} = {\rm diag}( \mathbf{I}_p - \mathbf{\Lambda}\mathbf{\Lambda}^T)$. 

For each model, 1000 data sets were generated with $\bm{x} \sim N(\bm{0},\bm{\Lambda}\bm{\Lambda}^T + \bm{\Psi})$.     The number of observations was $N=50, 100$, and $200$.  The model was estimated by the penalized maximum likelihood method  via the MC+ family with $\gamma=1.96$, and the lasso.  The regularization parameter was selected by the AIC, BIC and CAIC.  For Model (A), we also applied the traditional two-step estimation procedure, i.e., the model was estimated by the maximum likelihood method and then the following rotation techniques were employed: lasso-type loss function described in Example \ref{example:pss}, and the varimax rotation method \citep{kaiser1958varimax}.  The two-step traditional procedure was not applied to Model (B), because the maximum likelihood estimates were not available for $N<<p$ case.   

Tables \ref{table:simulation1} and \ref{table:simulation2} show the mean squared error (MSE) of $\bm{\Lambda}$ and $\bm{\Psi}$, and the true positive rate (TPR) and true negative rate (TNR)  over 1000 simulations.   The MSE of $\bm{\Lambda}$ and $\bm{\Psi}$ are defined by
\begin{eqnarray*}
{\rm MSE}_{\bm{\Lambda}} =\frac{1}{1000pm} \sum_{s=1}^{1000} \| \bm{\Lambda} - \hat{\bm{\Lambda}}^{(s)} \|^2 \mbox{ and } {\rm MSE}_{\bm{\Psi}} = \frac{1}{1000p} \sum_{s=1}^{1000} \| \bm{\Psi} - \hat{\bm{\Psi}}^{(s)} \|^2, 
\end{eqnarray*}
where $\hat{ \bm{\Lambda}}^{(s)} $ and $\hat{ \bm{\Psi}}^{(s)} $ are the estimates for the $s$-th dataset.  TPR (TNR)  indicates the proportion of cases where non-zero (zero) factor loadings correctly set to non-zero (zero).  
From the results of Tables \ref{table:simulation1} and \ref{table:simulation2}, we obtain the following empirical observations:
\begin{itemize}
\item When the number of observations $N$ increased, the MSE became small and the values of TNR and TPR became large.
\item All methods often detected the non-zero elements correctly (i.e.,  TPR was close to 1).
\item The MC+ family often performed much better than the lasso in terms of TNR:  the MC+ was able to produce sparser models than the lasso and also detected the zero elements correctly.   
\item For Model (A), the lasso penalization method was able to produce sparser solutions than the lasso rotation.    
\item For Model (B), the MC+ family yielded much smaller MSE than the lasso.    
\item   The AIC often selected too dense model, because the TNR was small compared with the BIC and CAIC.
\end{itemize}

\begin{table}[!t]
\caption{Mean squared error of the factor loadings and uniquenesses, and the true positive rate (TPR), true negative rate (TNR) for Model (A).  The terms $\rm rot_{lasso}$ and $ \rm rot_{varimax} $ in the last two columns represent the rotation method via the lasso penalty and the varimax rotation technique, respectively. } \label{table:simulation1}
\begin{center}
\begin{tabular}{lrrrrrrrrrr}
  \hline
 & \multicolumn{6}{c}{Penalization Methods}&  \multicolumn{2}{c}{Rotation Methods} \\ 
 & \multicolumn{2}{c}{AIC}&  \multicolumn{2}{c}{BIC}&  \multicolumn{2}{c}{CAIC} &---&---  \\ 
  & MC+ & lasso   & MC+ & lasso   & MC+ & lasso & $\rm rot_{lasso}$ & $ \rm rot_{varimax} $\\ 
  \hline
$N=50$ & \\
${\rm MSE}_{\bm{\Lambda}} \times 10^1$ & 1.04 & 1.13 & 1.65 & 1.46 & 2.89 & 1.87 & 1.07 & 0.98 \\ 
${\rm MSE}_{\bm{\Psi}} \times 10^1$&0.92 & 0.86 & 1.36 & 0.92 & 2.08 & 1.02 & 0.84 & 0.84 \\
  TPR& 1.00 & 1.00 & 0.98 & 1.00 & 0.94 & 0.99 & 1.00 & 1.00 \\
  TNR &0.70 & 0.41 & 0.80 & 0.50 & 0.85 & 0.55 & 0.15 & 0.00 \\ 
   \hline
   $N=100$ & \\
${\rm MSE}_{\bm{\Lambda}} \times 10^1$ &  0.42 & 0.54 & 0.40 & 0.72 & 0.45 & 0.84 & 0.50 & 0.46 \\ 
${\rm MSE}_{\bm{\Psi}} \times 10^1$&0.42 & 0.40 & 0.48 & 0.41 & 0.53 & 0.42 & 0.39 & 0.39 \\ 
  TPR& 1.00 & 1.00 & 1.00 & 1.00 & 1.00 & 1.00 & 1.00 & 1.00 \\ 
  TNR &   0.79 & 0.41 & 0.89 & 0.54 & 0.91 & 0.58 & 0.16 & 0.00 \\ 
   \hline
$N=200$ & \\
${\rm MSE}_{\bm{\Lambda}} \times 10^1$ &0.17 & 0.26 & 0.12 & 0.39 & 0.13 & 0.46 & 0.24 & 0.22 \\ 
${\rm MSE}_{\bm{\Psi}} \times 10^1$&  0.19 & 0.19 & 0.20 & 0.20 & 0.21 & 0.20 & 0.19 & 0.19 \\ 
  TPR&  1.00 & 1.00 & 1.00 & 1.00 & 1.00 & 1.00 & 1.00 & 1.00 \\ 
  TNR & 0.87 & 0.42 & 0.96 & 0.57 & 0.97 & 0.61 & 0.15 & 0.00 \\ 
   \hline
\end{tabular}
\end{center}
\end{table}

\begin{table}[!t]
\caption{Mean squared error of the factor loadings and uniquenesses, and the true positive rate (TPR), true negative rate (TNR) for Model (B).  } \label{table:simulation2}
\begin{center}
\begin{tabular}{lrrrrrrrr}
  \hline
 & \multicolumn{2}{c}{AIC}&  \multicolumn{2}{c}{BIC}&  \multicolumn{2}{c}{CAIC}\\
   & MC+ & lasso   & MC+ & lasso   & MC+ & lasso & \\
     \hline
$N=50$ & \\
${\rm MSE}_{\bm{\Lambda}} \times 10^{-1} $ &   2.29 & 4.06 & 2.74 & 9.29 & 3.36 & 15.7 \\
${\rm MSE}_{\bm{\Psi}} $& 3.81 & 3.65 & 4.09 & 3.65 & 5.07 & 3.65 \\
  TPR&  1.00 & 1.00 & 0.96 & 1.00 & 0.92 & 1.00  \\
  TNR &0.44 & 0.01 & 0.70 & 0.02 & 0.95 & 0.03 \\
     \hline
   $N=100$ & \\
${\rm MSE}_{\bm{\Lambda}} \times 10^{-1} $    & 0.91 & 2.18 & 0.78 & 7.21 & 0.76 &11.1 \\
${\rm MSE}_{\bm{\Psi}}$&1.87 & 1.80 & 1.93 & 1.80 & 1.93 & 1.80\\
  TPR& 1.00 & 1.00 & 1.00 & 1.00 & 1.00 & 1.00 \\
  TNR &  0.57 & 0.01 & 0.95 & 0.02 & 0.98 & 0.03\\
   \hline
$N=200$ & \\
${\rm MSE}_{\bm{\Lambda}} \times 10^{-1} $    & 0.45 & 1.28 & 0.32 & 4.74 & 0.32 & 6.94  \\
${\rm MSE}_{\bm{\Psi}}$& 0.93 & 0.90 & 0.93 & 0.90 & 0.93 & 0.90\\
  TPR& 1.00 & 1.00 & 1.00 & 1.00 & 1.00 & 1.00 \\
  TNR &   0.84 & 0.01 & 1.00 & 0.02 & 1.00 & 0.03\\
   \hline
\end{tabular}
\end{center}
\end{table}

%\subsection{Real Data Example} \label{sec:realdata}
\subsection{Analysis of Handwritten Data} \label{sec:Handwritten}
We illustrate the usefulness of the proposed procedure through a de-noising experiment of handwritten data available at \url{http://www-stat.stanford.edu/~tibs/ElemStatLearn/data.html}.   The dataset consists of 652 observations of the digit of ``4" made from 16 $\times$ 16 pixels scaled between $-1$ and $1$.  In order to evaluate the robustness of our proposed procedure, we randomly added the noise from $U(-1,1)$ to the pixels  randomly chosen with probability 0.1.  The dimensionality of the data was reduced to $m = 10$ so that the percentage of non-zero loadings was approximately 20\%, 40\%, 60\%, and  80\%.  We compared three data compression and reconstruction  techniques: penalized likelihood factor analysis via the lasso, penalized likelihood factor analysis via the MC+ with $\gamma=1.96$, and the sparse principal component analysis (Sparse PCA, \citealp{zou2006sparse}).

For Sparse PCA, the data reconstruction of data $\bm{x}_i$ was made by $\bm{\Lambda}(\bm{\Lambda}^T\bm{\Lambda})^{-1}\bm{\Lambda}^T\bm{x}_i$, where $\bm{\Lambda}$ is the sparse loading matrix.  
%Note that the loading matrix $\bm{\Lambda}$ becomes the matrix of eigenvectors when the penalty term is not added.  
For factor analysis,  the data was reconstructed via the posterior mean: $\bm{\Lambda}E[\bm{F}_i | \bm{x}_i] = \bm{\Lambda}\bm{M}^{-1}\bm{\Lambda}^T\bm{\Psi}^{-1}\bm{x}_i $.   The performance of the three methods to data compressions and reconstruction was evaluated by the MSE between the true data without noise and reconstructed data. Figure \ref{fig:handwritten} shows the reconstruction error (left panel), some of the digit images of original test data with noise, and reconstructed images with percentage of non-zero loadings being 40\% (right panel).  From the right panel of Figure \ref{fig:handwritten}, the penalized likelihood factor analysis produced smoother images than the Sparse PCA.  Furthermore, the MC+ yielded the smallest MSE among the methods when the percentage of non-zero loadings was small.   When the estimated loading matrix became dense, the three methods yielded almost same values of MSE.  

\begin{figure}[t]
\begin{center}
    \includegraphics[width=160mm]{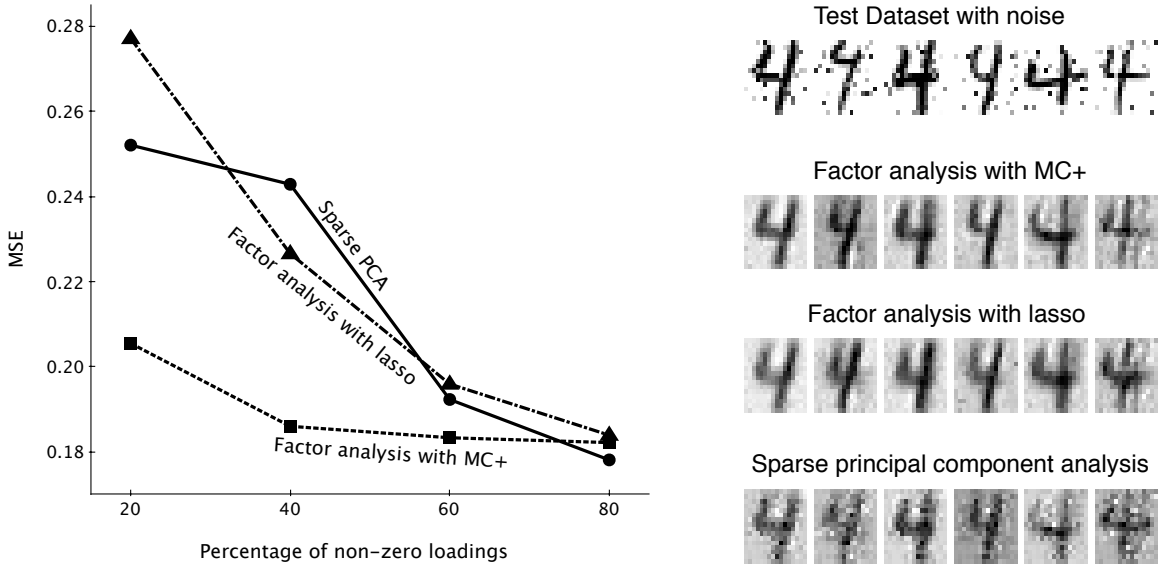}
\caption{The reconstruction error (left panel), some of the digit images of original test data with noise, and reconstructed images with percentage of non-zero loadings being 40\% (right panel).}
\label{fig:handwritten}
\end{center}
\end{figure}

\section{Concluding Remarks}
We have proposed a penalized maximum likelihood factor analysis via nonconvex penalties, and shown that  the proposed methodology can produce sparser solutions than the traditional rotation techniques.  A new algorithm via the EM algorithm and coordinate descent was presented, which produces the entire solution path for a wide variety of convex and nonconvex penalties.  Monte Carlo simulations were conducted to investigate the effectiveness of the proposed procedure.  Although the lasso can yield sparse solutions in factor analysis models, the MC+ often produced sparser solutions than the lasso, so that the true model structure can often be reconstructed.   The proposed procedure was applied to handwritten data, which showed that the MC+ was able to perform better than the lasso and Sparse PCA when the estimated loading matrix was sufficiently sparse.    %In the Tucker data example, we utilized the graphical tool and selected a sparse model using the GFI and the AGFI.   

The proposed algorithm can be applied to high-dimensional data such as $p=10000$, whereas sometimes it takes several hours to compute the entire solution path.  As a future research topic, it would be interesting to propose a much faster algorithm than the EM with coordinate descent. 
% Another interesting topic is to provide a theoretical justification of the information criteria given by Section \ref{sec:MSC}.  
In the present paper, the tuning parameter $\rho$ was selected by the information criteria based on the degrees of freedom of the lasso.  However, the degrees of freedom of the lasso are defined in the framework of the regression model.  Another interesting topic is to provide a mathematical support for the degrees of freedom of the lasso in  factor analysis model.  

\appendix
\def\thesection{Appendix \Alph{section}:}
\renewcommand{\theequation}{A.\arabic{equation}}
\section{Derivation of Complete-Data Penalized Log-Likelihood Function in EM Algorithm}
 In order to apply the EM algorithm, first, the common factors $\bm{f}_n$ can be regarded as missing data and maximize the complete-data penalized log-likelihood function
\begin{equation*}
\l_{\rho}^{C} (\bm{\Lambda},\bm{\Psi}) = \sum_{n=1}^N \log f(\bm{x}_n,\bm{f}_n) - N   \sum_{i=1}^p\sum_{j=1}^m \rho P(|\lambda_{ij}|), \label{g_C}
\end{equation*}
where the density function $f(\bm{x}_n,\bm{f}_n)$ is defined by
\begin{equation*}
f(\bm{x}_n,\bm{f}_n) = \prod_{i=1}^p \left\{  (2\pi\psi_i)^{-1/2} \exp \left( - \frac{ (x_{ni}-\bm{\lambda}_i^T\bm{f}_n )^2}{2\psi_i}  \right) \right\}  (2\pi)^{-m/2}\exp \left( - \frac{\| \bm{f}_n \|^2}{2} \right)
\end{equation*}
Then, the expectation of  $\l_{\rho}^{C} $ can be taken with respect to the distributions $f(\bm{f}_n | \bm{x}_n,\bm{\Lambda},\bm{\Psi})$,
\begin{eqnarray*}
E[\l_{\rho}^{C} (\bm{\Lambda},\bm{\Psi})] &=&-\frac{N(p+m)}{2} \log(2\pi) - \frac{N}{2} \sum_{i=1}^p \log \psi_i \\
&&- \frac{1}{2} \sum_{n=1}^N\sum_{i=1}^p \frac{x_{ni}^2 - 2x_{ni}\bm{\lambda}_i^TE[\bm{F}_n|\bm{x}_n]+ \bm{\lambda}_i^T E[\bm{F}_n\bm{F}_n^T|\bm{x}_n]\bm{\lambda}_i}{\psi_i}  \\
&&- \frac{1}{2} {\rm  tr} \left \{ \sum_{n=1}^N E[\bm{F}_n\bm{F}_n^T|\bm{x}_n] \right\} - N   \sum_{i=1}^p\sum_{j=1}^m \rho P(|\lambda_{ij}|)
\end{eqnarray*}
For given $\bm{\Lambda}_{\rm old}$ and $\bm{\Psi}_{\rm old}$, the posterior  $f(\bm{f}_n | \bm{x}_n,\bm{\Lambda}_{\rm old}, \bm{\Psi}_{\rm old})$ is normally distributed with $E[\bm{F}_n|\bm{x}_n] = \bm{M}^{-1}\bm{\Lambda}_{\rm old}^T\bm{\Psi}_{\rm old}^{-1} \bm{x}_n$ and $E[\bm{F}_n\bm{F}_n^T|\bm{x}_n] = \bm{M} ^{-1} + E[\bm{F}_n|\bm{x}_n] E[\bm{F}_n|\bm{x}_n] ^T$, where $\bm{M} = \bm{\Lambda}_{\rm old}^T\bm{\Psi}_{\rm old}^{-1}\bm{\Lambda}_{\rm old} + \bm{I}_m$.  Then, we have
\begin{eqnarray*}
\sum_{n=1}^N E[\bm{F}_n]x_{ni} &=& \sum_{n=1}^N \bm{M}^{-1}\bm{\Lambda}_{\rm old}^T\bm{\Psi}_{\rm old}^{-1}\bm{x}_nx_{ni}=N\bm{M}^{-1}\bm{\Lambda}_{\rm old}^T\bm{\Psi}_{\rm old}^{-1}\bm{s}_i,\\
\sum_{n=1}^N E[\bm{F}_n\bm{F}_n^T] &=& \sum_{n=1}^N (\bm{M} ^{-1} + \bm{M}^{-1}\bm{\Lambda}_{\rm old}^T\bm{\Psi}_{\rm old}^{-1}\bm{x}_n\bm{x}_n^T\bm{\Psi}_{\rm old}^{-1}\bm{\Lambda}_{\rm old}\bm{M}^{-1})\\
&=&N (\bm{M} ^{-1} + \bm{M}^{-1}\bm{\Lambda}_{\rm old}^T\bm{\Psi}_{\rm old}^{-1}\bm{S}\bm{\Psi}_{\rm old}^{-1}\bm{\Lambda}_{\rm old}\bm{M}^{-1}),
\end{eqnarray*}
Let $\bm{M}^{-1}\bm{\Lambda}_{\rm old}^T\bm{\Psi}_{\rm old}^{-1}\bm{s}_i$ and $\bm{M} ^{-1} + \bm{M}^{-1}\bm{\Lambda}_{\rm old}^T\bm{\Psi}_{\rm old}^{-1}\bm{S}\bm{\Psi}_{\rm old}^{-1}\bm{\Lambda}_{\rm old}\bm{M}^{-1}$ be  $\bm{b}_i$ and $\bm{A}$, respectively.  Then, the expectation of  $\l_{\rho}^{C} $ in (8) can be derived.
%\begin{eqnarray*}
%E[\l_{\rho}^{C} (\bm{\Lambda},\bm{\Psi})] &=&- \frac{N}{2} \sum_{i=1}^p \log \psi_i - \frac{N}{2} \sum_{i=1}^p \frac{s_{ii} - 2\bm{\lambda}_i^T\bm{b}_i+ \bm{\lambda}_i^T \bm{A}\bm{\lambda}_i}{\psi_i}   - N   \sum_{i=1}^p\sum_{j=1}^m \rho P(|\lambda_{ij}|) +{\rm const.}
%\end{eqnarray*}

\section{Proof of Lemma 4.1}
The proof is by contradiction.  Assume that $\hat{\bm{\Lambda}}$ and $\hat{\bm{\Psi}}$ are the solution of (7) and $j$-th column of $\hat{\bm{\Lambda}}$ has only one nonzero element, say, $\hat{\lambda}_{aj}$. Another parameter $\hat{\bm{\Lambda}}^*$ and $\hat{\bm{\Psi}}^*$ are defined, where $\hat{\bm{\Lambda}}^*$  is same as $\hat{\bm{\Lambda}}$ but with $(a,j)$-th element being zero and $\hat{\bm{\Psi}}^*$  is same as $\hat{\bm{\Psi}}$  but with $j$-th diagonal element being $\hat{\psi}_j+\hat{\lambda}_{aj}^2$.  In this case,  we have the same covariance structure, i.e., $\hat{\bm{\Lambda}}\hat{\bm{\Lambda}}^T+\hat{\bm{\Psi}}=\hat{\bm{\Lambda}}^*\hat{\bm{\Lambda}}^{*T}+\hat{\bm{\Psi}}^*$, which suggests $\ell(\hat{\bm{\Lambda}}, \hat{\bm{\Psi}}) = \ell(\hat{\bm{\Lambda}}^*, \hat{\bm{\Psi}}^*)$, whereas the penalty term of $ \sum_{i=1}^p\sum_{j=1}^m \rho P(|\hat{\lambda}_{ij}|)$ is larger than $\sum_{i=1}^p\sum_{j=1}^m \rho P(|\hat{\lambda}^*_{ij}|)$.  This means $\ell_{\rho}(\hat{\bm{\Lambda}}, \hat{\bm{\Psi}}) < \ell_{\rho}(\hat{\bm{\Lambda}}^*, \hat{\bm{\Psi}}^*)$, which contradicts the assumption that $\hat{\bm{\Lambda}}$ and $\hat{\bm{\Psi}}$ are penalized maximum likelihood estimates.  
%\setcounter{equation}{0}
%\section*{Appendix}
%\end{center}

\bibliographystyle{ECA_jasa}
\bibliography{paper-ref}

\end{document}